\theoremstyle{plain}
\newtheorem{theorem}{Theorem}
\newtheorem{lemma}{Lemma}
\newtheorem{proposition}{Proposition}
\theoremstyle{definition}
\newtheorem{definition}{Definition}
\newtheorem{remark}{Remark}
\journal{}
\date{}
\begin{document}
\begin{frontmatter}
\title{Distributed Shor's algorithm}

\author{Ligang Xiao$^{1}$}
\author{Daowen Qiu$^{1,3,}$\corref{one}}
\author{Le Luo$^{2,3}$}
\author{Paulo Mateus$^{4}$}
\cortext[one]{Corresponding author (D. Qiu). {\it E-mail addresses:} issqdw@mail.sysu.edu.cn (D. Qiu)}

\address{
   $^{1}$Institute of Quantum Computing and Computer Theory, School of Computer Science and Engineering, Sun Yat-sen University, Guangzhou 510006, China\\
   $^{2}$School of Physics and Astronomy, Sun Yat-sen University, Zhuhai 519082, China\\
 $^{3}$QUDOOR Technologies Inc., Guangzhou,  China \\
	 $^{4}$Instituto de Telecomunica\c{c}\~{o}es, Departamento de Matem\'{a}tica,
	Instituto Superior T\'{e}cnico,  Av. Rovisco Pais 1049-001  Lisbon, Portugal}

\begin{abstract}
Shor's algorithm is one of the most important quantum algorithm proposed by Peter Shor [Proceedings of the 35th Annual Symposium on Foundations of Computer Science, 1994, pp. 124--134]. Shor's algorithm can factor a large integer with certain probability and costs  polynomial time in the length of the input integer. The key step of Shor's algorithm is the order-finding algorithm. Specifically, given an $L$-bit integer $N$, we first randomly pick an integer $a$ with $gcd(a,N)=1$, the order of $a$ modulo $N$ is the smallest positive integer $r$ such that $a^r\equiv 1 (\bmod N)$. The order-finding algorithm in Shor's algorithm first uses quantum operations to obtain an estimation of $\dfrac{s}{r}$ for some $s\in\{0, 1, \cdots, r-1\}$, then  $r$ is obtained by means of classical algorithms. In this paper, we propose a distributed Shor's algorithm. The difference between our distributed algorithm and the traditional order-finding algorithm is that we use two quantum computers separately to estimate partial bits of $\dfrac{s}{r}$ for some $s\in\{0, 1, \cdots, r-1\}$. To ensure their measuring results correspond to the same $\dfrac{s}{r}$, we need employ quantum teleportation. We integrate the measuring results via classical post-processing. After that, we get an estimation of $\dfrac{s}{r}$ with high precision. Compared with the traditional Shor's algorithm that uses multiple controlling qubits, our algorithm reduces nearly $\dfrac{L}{2}$ qubits and reduces the circuit depth of each computer.\\
\end{abstract}

\begin{keyword}
Shor's algorithm \sep Distributed Shor's algorithm \sep Quantum teleportation \sep Circuit depth \
\end{keyword}

\end{frontmatter}


\section{Introduction}\label{sec:introduction}
	Quantum computing has shown great potential in some fields or problems, such as chemical molecular simulation \cite{aspuru2005simulated}, portfolio optimization \cite{rosenberg2016solving}, large number decomposition \cite{shor1994algorithms}, unordered database search \cite{grover1996a} and linear equation solving \cite{harrow2009quantum} et al. At present, there have been many useful algorithms in quantum computing \cite{montanaro2016quantum}, but to realize these algorithms requires the power of large-scale general quantum computers. However, it is still very difficult to develop a large-scale general quantum computer, because there are important physical problems in quantum computer that have not been solved. Therefore it is necessary to consider reducing the number of qubits and other computing resources required for quantum algorithms.

	Distributed quantum computing is a computing method that solves problems collaboratively through multiple computing nodes. In distributed quantum computing, we can use multiple medium-scale quantum computers to complete a task that was originally completed by a single large-scale quantum computer. Distributed quantum computing not only reduces the number of qubits required, but also sometimes reduces the circuit depth of each computer. This is also important since noise is increased with circuit being deepened . Therefore, distributed quantum computing has been studied significantly (for example, \cite{avron2021quantum,beals2013effcient,li2017application,yimsiriwattana2004distributed}).

	Shor's algorithm proposed by Peter Shor in 1994 \cite{shor1994algorithms} is an epoch-making discovery. It can factor a large integer with certain probability and costs time polynomial in the length of the input integer, whereas the time complexity of the best known classical algorithm for factoring large numbers is exponential. Shor's algorithm can be applied in cracking various cryptosystems, such as RSA cryptography and elliptic curve cryptography. For this reason, Shor's algorithm has received extensive attention from the community. However, recently some researchers have pointed out that using Shor's algorithm to crack the commonly used 2048-bit RSA integer requires  physical qubits  of millions \cite{craig2021how}. So it is vital to consider reducing the qubits required in Shor's algorithm. Many researchers have been working on reducing the number of qubits required for Shor's algorithm \cite{beauregard2003circuit, haner2017factoring,parker2000efficient}, and these results have shown that Shor's algorithm can be implemented using only one controlling qubit to factor a $L$-bit integer together with $2L+c$ qubits and  circuit depth $O(L^3)$, where $c$ is a constant.

	In 2004, Yimsiriwattana et al \cite{yimsiriwattana2004distributed} proposed a distributed Shor's algorithm. In this distributed algorithm, it directly divides the qubits into several parts reasonably, so each part has fewer qubits than the original one. Since all unitary operators can be decomposed into single qubit quantum gates and CNOT gates \cite{nielsen2000quantum}, they only need to consider how to implement CNOT gates acting on different parts, while a CNOT gate acting on different parts can be implemented by means of pre-sharing EPR pairs, local operations and classical communication. Their distributed algorithm needs to communicate $O(L^2)$ classical bits.

	In this paper, we propose a new distributed Shor's algorithm. In our distributed algorithm, two computers execute sequentially. Each computer estimates several bits of some key intermediate quantity. In order to guarantee the correlation between the two computers' measuring results to some extent, we employ quantum communication. Furthermore, to obtain high accuracy, we can adjust the measuring result of the first computer in terms of the measuring result of the second computer through classical post-processing. Compared with the traditional Shor's algorithm that uses multiple controlling qubits, our algorithm reduces the cost of qubits (reduces nearly $\dfrac{L}{2}$ qubits) and the circuit depth of each computer. Although each computer in our distributed algorithm requires more qubits than the Shor's algorithm mentioned above that uses only one controlling qubit, our method of using quantum communication to distribute the phase estimation  of Shor's algorithm may be applicable to other quantum algorithms.

	The remainder of the paper is organized as follows. In Section \ref{sec:preliminaries}, we review quantum teleportation and some quantum algorithms related to Shor's algorithm. In Section \ref{sec:distributed order finding algorithm}, we present a distributed Shor's algorithm (more specifically, a distributed order-finding algorithm), and prove the correctness of our algorithm. In Section \ref{sec:complexity analysis}, we analyze the performance of our algorithm, including space complexity, time complexity, circuit depth and communication complexity. Finally in Section \ref{sec:conclusions}, we conclude with a summary.

\section{Preliminaries}\label{sec:preliminaries}

	In this section, we review the quantum Fourier transform, phase estimation algorithm, order-finding algorithm and others we will use. We assume that the readers are familiar with the liner algebra and basic notations in quantum computing (for the details we can refer to \cite{nielsen2000quantum}).

\subsection{Quantum Fourier transform}

	Quantum Fourier transform is a unitary operator with the following action on the standard basis states:
\begin{equation}
QFT |j\rangle=\frac{1}{\sqrt{2^n}}\sum_{k=0}^{2^n-1}e^{2\pi ijk/2^n}|k\rangle\text{,}
\end{equation}
for $j=0,1,\cdots,2^n-1$. Hence the inverse quantum Fourier transform is acted  as follows:
\begin{equation}\label{inverse_QFT}
QFT^{-1} \frac{1}{\sqrt{2^n}}\sum_{k=0}^{2^n-1}e^{2\pi ijk/2^n}|k\rangle=|j\rangle\text{,}
\end{equation}
for $j=0,1,\cdots,2^n-1$.
Quantum Fourier transform and the inverse quantum Fourier transform can be implemented using $O(n^2)$ single elementary gates (i.e. $O(n^2)$ single qubit gates and two-qubit gates) \cite{nielsen2000quantum, shor1994algorithms}.

\subsection{Phase estimation algorithm}

	Phase estimation algorithm is an application of the quantum Fourier transform. Let $|u\rangle$ be a quantum state and let $U$ be a unitary operator that satisfies $U|u\rangle=e^{2\pi i\omega}$ for some real number $\omega\in[0,1)$. Suppose we can create the quantum state $|u\rangle$ and implement controlled operation $C_m(U)$ such that
\begin{equation}
C_m(U)|j \rangle|u\rangle=|j\rangle U^j|u\rangle
\end{equation}
for any positive integer $m$ and $m$-bit string $j$, where the first register is control qubits. Figure \ref{fig:controlled-U} shows the implementation of $C_m(U)$. Then we can apply phase estimation algorithm to estimate $\omega$. For the sake of convenience, we first define the following notations. In this paper, we treat bit strings and their corresponding binary integers as the same.

\begin{figure}[h!]
	\centering
	\includegraphics[width=0.5\textwidth]{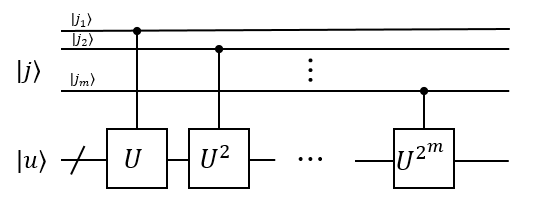}
	\caption{\label{fig:controlled-U} Implementation for $C_m(U)$}
\end{figure}

\begin{definition}
For any real number $\omega=a_{1}a_{2}\cdots a_l.b_1b_2\cdots$, where $a_{k_1}\in\{0,1\}, k_1=1,2,\cdots, l$ and $b_{k_2}\in\{0,1\}, k_2=1,2,\cdots$, denote $|\psi_{t,\omega}\rangle, FBits(\omega,i,j), IBits(\omega,i,j)$ and $d_t(x,y)$ respectively as follows:
\begin{itemize}
\item $|\psi_{t,\omega}\rangle$: for any positive integer $t$, $|\psi_{t,\omega}\rangle=QFT^{-1}\dfrac{1}{\sqrt{2^t}}\sum\limits_{j=0}^{2^t-1}e^{2\pi ij\omega}|j\rangle $ .
\item $FBits(\omega,i,j)$: for any integer $i,j$ with $1\leq i\leq j$, $FBits(\omega,i,j)=b_i b_{i+1}\cdots b_j$.
\item $IBits(\omega,i,j)$: for any integer $i,j$ with $1\leq i\leq j\leq l$, $IBits(\omega,i,j)=a_{i} a_{i+1}\cdots a_j$.
\item $d_t(x,y)$: for any two $t$-bit strings (or $t$-bit binary integers) $x,y$, define $d_t(x,y)=\min(|x-y|, 2^t-|x-y|)$.
\end{itemize}
\end{definition}

$d_t(\cdot, \cdot)$ is a useful distance to estimate the error of the algorithms in our paper and it has the following properties.  We specify $a \bmod N=(kN+a) \bmod N$ for any negative integer $a$ and positive integer $N$, where $k$ is an integer and satisfies $kN+a\geq 0$.

\begin{lemma}\label{d_t}
Let $t$ be a positive integer and let $x,y$ be any two $t$-bit strings. It holds that: \\
{\rm (I)} Let $B=\{b\in\{-(2^t-1),\cdots,2^{t}-1\}: (x+b)\bmod 2^t=y\}$. Then $d_t(x,y)=\min_{b\in B}|b|$.\\
{\rm (II)} $d_t(\cdot,\cdot)$ is a distance on $\{0,1\}^t$.\\
{\rm (III)} Let $t_0<t$ be an positive integer. If $d_t(x,y)<2^{t-t_0}$, then
\begin{equation}\label{d_t 1}
d_{t_0}(IBits(x,1,t_0),IBits(y,1,t_0))\leq 1.
\end{equation}
\end{lemma}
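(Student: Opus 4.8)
The plan is to prove the three parts in order, treating $x,y$ throughout as integers in $\{0,1,\dots,2^t-1\}$. For part (I), I would start from the defining congruence: $(x+b)\bmod 2^t=y$ is equivalent to $b\equiv y-x \pmod{2^t}$, so every element of $B$ is one of $y-x,\ y-x-2^t,\ y-x+2^t$, and exactly those among these three that lie in $\{-(2^t-1),\dots,2^t-1\}$ belong to $B$ (the other shifts by $\pm 2^t$ escape the range). Since $y-x$ itself always lies in range, the absolute values occurring in $B$ are drawn from $\{|x-y|,\ 2^t-|x-y|\}$, whence $\min_{b\in B}|b|=\min(|x-y|,2^t-|x-y|)=d_t(x,y)$, which is the claim.

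For part (II), nonnegativity and symmetry are immediate from the definition, and $d_t(x,y)=0$ forces $|x-y|\in\{0,2^t\}$, hence $x=y$ because $|x-y|\le 2^t-1$. The only substantive axiom is the triangle inequality, and here I would reuse (I) in the equivalent form $d_t(x,y)=\min_{k\in\mathbb Z}\bigl|(x-y)+k2^t\bigr|$ (restricting to $B$ does not change the minimum, since a minimizer always satisfies $|b|\le 2^{t-1}<2^t$). Choosing $k_1,k_2$ that attain the minima for the pairs $(x,y)$ and $(y,z)$, the integer $(x-z)+(k_1+k_2)2^t$ is an admissible shift for $(x,z)$, so $d_t(x,z)\le\bigl|(x-z)+(k_1+k_2)2^t\bigr|\le\bigl|(x-y)+k_12^t\bigr|+\bigl|(y-z)+k_22^t\bigr|=d_t(x,y)+d_t(y,z)$ by the ordinary triangle inequality.

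For part (III), which I expect to be the main obstacle, the first observation is that $IBits(x,1,t_0)=\lfloor x/2^{t-t_0}\rfloor$, i.e. it extracts the top $t_0$ bits; write $M=2^{t-t_0}$, $X=\lfloor x/M\rfloor$, $Y=\lfloor y/M\rfloor$. The hypothesis $d_t(x,y)<M$ then splits into two cases according to which term realizes the minimum in $d_t$. If $|x-y|<M$, a short floor computation (write $x=qM+r,\ y=q'M+r'$ with $0\le r,r'<M$) shows $|X-Y|\le 1$, so $d_{t_0}(X,Y)\le|X-Y|\le 1$. The delicate case is the wraparound one, $2^t-|x-y|<M$: assuming $x\le y$ this reads $x+(2^t-y)<M$ with both summands nonnegative, forcing $x<M$ and $y>2^t-M=(2^{t_0}-1)M$; hence $X=0$ and $Y=2^{t_0}-1$, and then $d_{t_0}(0,2^{t_0}-1)=\min(2^{t_0}-1,1)=1$. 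In every case $d_{t_0}(X,Y)\le 1$, which finishes the proof. The care in this part lies entirely in the floor bookkeeping of the direct case and in verifying that the wraparound case really pins both truncations to the extreme codewords $0$ and $2^{t_0}-1$, which sit at circular distance exactly $1$.
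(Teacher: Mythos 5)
Your proposal is correct. Parts (I) and (II) follow essentially the paper's own route: the paper likewise identifies the (at most) two elements of $B$ as $y-x$ and $\pm(2^t-|x-y|)$, and proves the triangle inequality by composing two admissible shifts; your reformulation of (I) as $d_t(x,y)=\min_{k\in\mathbb{Z}}|(x-y)+k2^t|$ is a small but genuine improvement, since it silently absorbs the edge case where the composed shift $b_1+b_2$ equals $\pm 2^t$ and thus falls just outside the set $B$, a point the paper's appeal to ``(I) again'' glosses over. Part (III) is where you diverge. The paper stays inside the modular framework: it writes $x=2^{t-t_0}IBits(x,1,t_0)+IBits(x,t_0+1,t)$, uses (I) to produce a shift $b$ with $|b|<2^{t-t_0}$ carrying $x$ to $y$ modulo $2^t$, concludes $d_t(2^{t-t_0}IBits(x,1,t_0),\,2^{t-t_0}IBits(y,1,t_0))<2\cdot 2^{t-t_0}$, and divides out the scale factor, handling the direct and wraparound situations uniformly. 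You instead split on which term attains the minimum in $d_t(x,y)$: floor bookkeeping in the direct case, and in the wraparound case the observation that the two truncations are pinned to the extreme values $0$ and $2^{t_0}-1$, which sit at circular distance exactly $1$. Your version is more elementary and makes the geometry explicit at the cost of a case split; the paper's is shorter but relies on the unstated scaling identity $d_t(2^{t-t_0}u,2^{t-t_0}v)=2^{t-t_0}d_{t_0}(u,v)$.
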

\begin{proof}
First we prove (I). It is clear for the case of $x=y$. Without loss of generality, assume $x>y$. Since $x\not=y$, we have $B$ contains only $2$ elements. Note that

\begin{align}
x+(y-x)\bmod 2^t=y,\\
x+(2^t-(x-y))\bmod 2^t=y,\\
 |y-x|\leq 2^t-1,\\
 |2^t-(x-y)|\leq 2^t-1
\end{align}
and $y-x\not=2^t-(x-y)$,
we get that $y-x$ and $2^t-(x-y)$ are exactly two elements of $B$. Hence $\min_{b\in B}|b|=\min(|x-y|, 2^t-|x-y|)=d_t(x,y)$. Thus {\rm (I)} holds.

Then we prove $\rm (II)$. We just need to show that $d_t(\cdot,\cdot)$ satisfies the triangle inequality, that is, $d_t(x,y)\leq d_t(x,z)+d_t(z,y)$ holds for any $t$-bit string $z$. By {\rm (I)}, we know that there exists $b_1,b_2\in \{-(2^t-1),\cdots,2^{t}-1\}$ such that
\begin{equation}
|b_1|=d_t(x,z),|b_2|=d_t(z,y),
\end{equation}
and
\begin{equation}
(x+b_1)\bmod 2^t=z,(z+b_2)\bmod 2^t=y.
\end{equation}
Hence $(x+b_1+b_2)\bmod 2^t=y$.
Then by (I) again, we have
\begin{equation}
d_t(x,y)\leq |b_1+b_2|\leq |b_1|+|b_2|= d_t(x,z)+d_t(z,y).
\end{equation}
Thus, (II) holds.

Finally we prove $\rm (III)$. By (I) and $d_t(x,y)<2^{t-t_0}$, we know that there exists an integer $b$ with $|b|<2^{t-t_0}$ such that
\begin{equation}
((2^{t-t0}IBits(x,1,t_0)+IBits(x,t_0+1,t)+b)\bmod 2^t=2^{t-t0}IBits(y,1,t_0)+IBits(y,t_0+1,t).
\end{equation}
Then by (I) again we have
\begin{equation}
d_t(2^{t-t0}IBits(x,1,t_0),2^{t-t0}IBits(y,1,t_0))\leq |b+IBits(x,t_0+1,t)-IBits(y,t_0+1,t)|<2\cdot 2^{t-t0}.
\end{equation}
Hence
\begin{equation}
d_{t_0}(IBits(x,1,t_0),IBits(y,1,t_0))< 2.
\end{equation}
Therefore Equation (\ref{d_t 1}) holds.
\end{proof}

We can understand $d_t(\cdot,\cdot)$ in a more intuitive way. We place numbers 0 to $2^t$ evenly on a circumference where $0$ and $2^t$ coincide. Suppose the distance of two adjacent points on the circumference is $1$. Then $d_t(x,y)$ can be regarded as the length of the shortest path on the circumference from $x$ to $y$. Next we review the phase estimation algorithm (see Algorithm \ref{Phase_estimation_algorithm}) and its associated results.

\begin{algorithm}[h]
\caption{Phase estimation algorithm}
\label{Phase_estimation_algorithm}
\textbf{Procedure}:
\begin{algorithmic}[1]
\STATE Create initialize state $|0\rangle^{\otimes t}|u\rangle$.
\STATE Apply $H^{\otimes t}$ to the first register:\\
  \quad$H^{\otimes t}|0\rangle^{\otimes t}|u\rangle=\dfrac{1}{\sqrt{2^t}}\sum\limits_{j=0}^{2^t-1}|j\rangle|u\rangle$.
\STATE Apply $C_t(U)$:\\
  \quad$C_t(U)\dfrac{1}{\sqrt{2^t}}\sum\limits_{j=0}^{2^t-1}|j\rangle|u\rangle=\dfrac{1}{\sqrt{2^t}}\sum\limits_{j=0}^{2^t-1}|j\rangle U^j|u\rangle=\dfrac{1}{\sqrt{2^t}}\sum\limits_{j=0}^{2^t-1}|j\rangle e^{2\pi ij\omega}|u\rangle$.
\STATE Apply $QFT^{-1}$:\\
 \quad$QFT^{-1}\dfrac{1}{\sqrt{2^t}}\sum\limits_{j=0}^{2^t-1}e^{2\pi ij\omega}|j\rangle |u\rangle=|\psi_{t,\omega}\rangle|u\rangle$.
\STATE Measure the first register:\\
 \quad obtain a $t$-bit string $\widetilde{\omega}$.

\end{algorithmic}
\end{algorithm}

	If the fractional part of $\omega$ does not exceed $t$ bits (i.e. $2^t\omega$ is an integer), by observing Equation (\ref{inverse_QFT}) and the step 4 in Algorithm \ref{Phase_estimation_algorithm}, we can see that $\widetilde{\omega}$ is a perfect estimate of $\omega$ (i.e. $\dfrac{\widetilde{\omega}}{2^t}=\omega$). However, sometimes $\omega$ is not approximated by $\dfrac{\widetilde{\omega}}{2^t}$ but is approximated by $1-\dfrac{\widetilde{\omega}}{2^t}$. For example, if the binary representation of $\omega$ is $\omega=0.11\cdots1$ (sufficiently many $1$s), we will obtain the measuring result $00\cdots 0$ with high probability.  The purpose of  phase estimation algorithm is to find a $\widetilde{\omega}$ such that $\dfrac{\widetilde{\omega}}{2^t}$ is close to $\omega$ or $\omega-1$. We have the following results.

\begin{proposition}[See \cite{nielsen2000quantum}]\label{phase_estimation_result}
	In {\rm Algorithm \ref{Phase_estimation_algorithm}}, for any $\epsilon>0$ and any positive integer $n$, if $t=n+\lceil\log_2(2+\dfrac{1}{2\epsilon})\rceil$, then the probability of $d_t(\widetilde{\omega},FBits(\omega,1,t))<2^{t-n}$ is at least $1-\epsilon$.
\end{proposition}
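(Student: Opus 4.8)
The plan is to follow the standard Fourier-analytic estimate for phase estimation, working directly with the amplitudes of the state $|\psi_{t,\omega}\rangle$ produced at step 4. First I would expand the inverse transform using Equation (\ref{inverse_QFT}): writing $|\psi_{t,\omega}\rangle=\sum_{k=0}^{2^t-1}\alpha_k|k\rangle$, the amplitude of $|k\rangle$ is the geometric sum
\[
\alpha_k=\frac{1}{2^t}\sum_{j=0}^{2^t-1}e^{2\pi i j(\omega-k/2^t)}.
\]
Since $e^{2\pi ij\omega}$ depends only on the fractional part of $\omega$, I would set $b=FBits(\omega,1,t)$ and $\delta=(\omega-\lfloor\omega\rfloor)-b/2^t$, so that $0\le 2^t\delta\le 1$, and reindex the possible outcomes as $k=(b+\ell)\bmod 2^t$ with $\ell$ ranging over $(-2^{t-1},2^{t-1}]$. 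Summing the geometric series then gives a closed form for $\alpha_{(b+\ell)\bmod 2^t}$ whose modulus I can control.

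The core step is the amplitude bound. Using the numerator estimate $|1-e^{2\pi i(2^t\delta-\ell)}|\le 2$ together with the elementary inequality $|1-e^{i\theta}|\ge 2|\theta|/\pi$ valid for $|\theta|\le\pi$, I would obtain
\[
\bigl|\alpha_{(b+\ell)\bmod 2^t}\bigr|^2\le\frac{1}{4(\ell-2^t\delta)^2}
\]
for every $\ell$ in the stated range. Next I would identify the failure event: for $|\ell|\le 2^{t-1}$ the circular distance satisfies $d_t\bigl((b+\ell)\bmod 2^t,\,b\bigr)=|\ell|$, so the event $d_t(\widetilde{\omega},b)\ge 2^{t-n}$ is exactly $|\ell|\ge 2^{t-n}$. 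Setting $e=2^{t-n}-1$, the failure probability is the tail sum $\sum_{|\ell|\ge e+1}\bigl|\alpha_{(b+\ell)\bmod 2^t}\bigr|^2$.

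For the final step I would bound this tail. Using $0\le 2^t\delta\le 1$ to estimate $(\ell-2^t\delta)^2$ from below on the positive and negative branches separately, and comparing the resulting sums $\sum_{\ell\ge e+1}(\ell-1)^{-2}$ and $\sum_{\ell\ge e+1}\ell^{-2}$ with the integral $\int_{e-1}^{\infty}x^{-2}\,dx=1/(e-1)$, I expect to reach the clean bound $p(\mathrm{fail})\le\frac{1}{2(e-1)}$. Substituting $e=2^{t-n}-1$ gives $p(\mathrm{fail})\le\frac{1}{2(2^{t-n}-2)}$, and requiring this to be at most $\epsilon$ is equivalent to $2^{t-n}\ge 2+\frac{1}{2\epsilon}$, i.e. $t\ge n+\log_2\bigl(2+\frac{1}{2\epsilon}\bigr)$; taking the ceiling yields the stated choice of $t$ and hence success probability at least $1-\epsilon$.

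The main obstacle I anticipate is bookkeeping rather than depth: correctly reducing everything to the fractional part of $\omega$, handling the wrap-around so that the outcomes are indexed symmetrically about $b$ and the identity $d_t=|\ell|$ is justified, and splitting the tail into its two branches with the right off-by-one shifts so that the integral comparison produces $1/(e-1)$ rather than a weaker constant. I must also apply $|1-e^{i\theta}|\ge 2|\theta|/\pi$ only where $|\theta|\le\pi$, which is exactly what the restriction $|\ell|\le 2^{t-1}$ together with $0\le 2^t\delta\le 1$ guarantees.
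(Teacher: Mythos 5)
Your proposal is correct: it is precisely the standard Fourier-analytic argument from Nielsen and Chuang that the paper itself cites for this proposition without reproducing a proof (the amplitude bound $|\alpha_{(b+\ell)\bmod 2^t}|^2\le \tfrac{1}{4(\ell-2^t\delta)^2}$, the two-sided tail sum, and the integral comparison giving $p(\mathrm{fail})\le \tfrac{1}{2(2^{t-n}-2)}$ all match the textbook derivation). The only point worth making explicit in a write-up is the translation between $|\ell|$ and the paper's circular distance $d_t$, which follows directly from part (I) of Lemma \ref{d_t}; otherwise nothing is missing.
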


\begin{lemma}\label{result_acuracy}
For any $t$-bit string $\widetilde{\omega}$ and real number $\omega\in [0,1)$. If $d_t(\widetilde{\omega},FBits(\omega,1,t))<2^{t-n}$, then we have $|\dfrac{\widetilde{\omega}}{2^t}-\omega|\leq 2^{-n}$ or $1-|\dfrac{\widetilde{\omega}}{2^t}-\omega|\leq 2^{-n}$, where $n< t$.
\end{lemma}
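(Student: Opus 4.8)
The plan is to convert the two-sided conclusion into a single statement about distance on the unit circle and then control that distance using the \emph{integrality} of $d_t$. First I would abbreviate $F := FBits(\omega,1,t)$ and note that, because $\omega \in [0,1)$, the bit string $F$ is exactly the integer $\lfloor 2^t\omega\rfloor$; hence the truncation error $\eta := \omega - F/2^t$ lies in $[0, 2^{-t})$, with the upper bound \emph{strict}. I would also observe that the disjunction ``$|\widetilde{\omega}/2^t - \omega|\le 2^{-n}$ or $1-|\widetilde{\omega}/2^t - \omega|\le 2^{-n}$'' is precisely the assertion that $\widetilde{\omega}/2^t$ and $\omega$ are within $2^{-n}$ on the circle $\mathbb{R}/\mathbb{Z}$: since both lie in $[0,1)$ we have $|\widetilde{\omega}/2^t-\omega|<1$, so the smaller of this quantity and its complement $1-|\widetilde{\omega}/2^t-\omega|$ is exactly the circular distance.

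Next I would extract a concrete witness from the hypothesis. Applying Lemma \ref{d_t}(I) to $d_t(\widetilde{\omega},F)<2^{t-n}$ yields an integer $b$ with $(\widetilde{\omega}+b)\bmod 2^t = F$ and $|b| = d_t(\widetilde{\omega},F)$. The step I expect to be the main obstacle is that a naive triangle inequality gives only $|\widetilde{\omega}/2^t-\omega|\le |b|/2^t + \eta < 2^{-n} + 2^{-t}$, which is too weak to reach $2^{-n}$. The fix is integrality: $d_t$ takes integer values and $2^{t-n}$ is an integer (as $n<t$), so in fact $|b|\le 2^{t-n}-1$ and therefore $|b|/2^t \le 2^{-n}-2^{-t}$. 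This deficit of exactly $2^{-t}$ is what absorbs the truncation error $\eta<2^{-t}$, and it is the crux of the argument.

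Then I would split on the integer $m$ determined by $\widetilde{\omega}+b = F + m\,2^t$; only $m\in\{-1,0,1\}$ is possible since $0\le \widetilde{\omega},F\le 2^t-1$ and $|b|<2^t$. When $m=0$ we have $\widetilde{\omega}-F=-b$, so $|\widetilde{\omega}/2^t-\omega|\le |b|/2^t+\eta < (2^{-n}-2^{-t})+2^{-t}=2^{-n}$, giving the first alternative. When $m=\pm 1$ the same bookkeeping applied to $\widetilde{\omega}/2^t-\omega = m - b/2^t - \eta$ forces $\widetilde{\omega}/2^t$ and $\omega$ to sit at opposite ends of $[0,1)$; the complementary distance $1-|\widetilde{\omega}/2^t-\omega|$ then equals $|b|/2^t+\eta$ (for $m=1$) or $|b|/2^t-\eta$ (for $m=-1$), each of which is at most $(2^{-n}-2^{-t})+2^{-t}=2^{-n}$, giving the second alternative.

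Finally, I would record that the whole computation is just the triangle inequality for the circle metric $\delta(x,y)=\min_{k\in\mathbb{Z}}|x-y-k|$: one checks $\delta(\widetilde{\omega}/2^t,F/2^t)=d_t(\widetilde{\omega},F)/2^t\le 2^{-n}-2^{-t}$ and $\delta(F/2^t,\omega)=\eta<2^{-t}$, so $\delta(\widetilde{\omega}/2^t,\omega)<2^{-n}$ in a single line, using that $\delta$ is a metric in parallel with Lemma \ref{d_t}(II). The explicit case split above is only needed if one prefers to avoid introducing $\delta$ and argue directly from the definitions already in the paper.
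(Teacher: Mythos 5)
Your proposal is correct and takes essentially the same route as the paper: split according to which term achieves the minimum in $d_t(\widetilde{\omega},FBits(\omega,1,t))=\min(|\cdot|,\,2^t-|\cdot|)$ and apply the triangle inequality against the truncation error $|2^t\omega-FBits(\omega,1,t)|<1$. The only difference is presentational --- you make explicit the integrality step ($d_t<2^{t-n}$ forces $d_t\leq 2^{t-n}-1$, leaving exactly the $2^{-t}$ of slack needed to absorb the truncation error) that the paper's chain of inequalities uses implicitly.
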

\begin{proof}
Since $|2^t\omega-FBits(\omega,1,t)|<1$, if $d_t(\widetilde{\omega},FBits(\omega,1,t))=|\widetilde{\omega}-FBits(\omega,1,t)|$, we have
\begin{equation}
|\widetilde{\omega}-2^t\omega|\leq |\widetilde{\omega}-FBits(\omega,1,t)|+|FBits(\omega,1,t)-2^t\omega|\leq2^{t-n},
\end{equation}
and thus $|\dfrac{\widetilde{\omega}}{2^t}-\omega|\leq 2^{-n}$;
if $d_t(\widetilde{\omega},FBits(\omega,1,t))=2^t-|\widetilde{\omega}-FBits(\omega,1,t)|$, we have
\begin{equation}
2^t-|\widetilde{\omega}-2^t\omega|\leq 2^t-(|\widetilde{\omega}-FBits(\omega,1,t)|-|FBits(\omega,1,t)-2^t\omega|)\leq2^{t-n},
\end{equation}
and therefore, we have $1-|\dfrac{\widetilde{\omega}}{2^t}-\omega|\leq 2^{-n}$.
\end{proof}

That is to say, if $\dfrac{\widetilde{\omega}}{2^t}$ is an estimate of $FBits(\omega,1,t)$ with error less than $2^{-n}$, then $\dfrac{\widetilde{\omega}}{2^t}$ is an estimate of $\omega$ with error no larger than $2^{-n}$.

\subsection{Order-finding algorithm}

 Phase estimation algorithm is a key subroutine in order-finding algorithm. Given an $L$-bit integer $N$ and a positive integer $a$ with $gcd(a,N)=1$, the purpose of order-finding algorithm is to find the order $r$ of $a$ modulo $N$, that is, the least integer $r$ that satisfies $a^r\equiv 1(\bmod\ N)$. An important unitary operator  $M_a$ in order-finding algorithm is defined as
\begin{equation}
M_a|x \rangle=|ax\ \bmod\ N\rangle \text{.}
\end{equation}
Denote
\begin{equation}
|u_s\rangle=\dfrac{1}{\sqrt{r}}\sum\limits_{k=0}^{r-1}e^{-2\pi i\frac{s}{r}k}|a^k \bmod\ N\rangle\text{.}
\end{equation}
We have
\begin{align}
M_a|u_s\rangle=e^{2\pi i\frac{s}{r}} |u_s\rangle,\\
\dfrac{1}{\sqrt{r}}\sum\limits_{s=0}^{r-1}|u_s\rangle=|1\rangle,
\end{align} and
\begin{equation}\label{us_orthonormal}
\langle u_s|u_{s'}\rangle=\delta_{s,s'}=
\begin{cases} 0 &\text{if $s\not= s'$},\\
1 &\text{if $s=s'$}.
\end{cases}
\end{equation}
So if we expect to apply phase estimation algorithm in finding order, the key is to construct $C_m(M_a)$, that is, for any $m$-bit string $j$,
\begin{equation}
C_m(M_a)|j\rangle|x \rangle=|j\rangle|a^jx \bmod\ N\rangle\text{.}
\end{equation}
Algorithm \ref{Order_finding_algorithm}  \cite{nielsen2000quantum} and Figure \ref{fig:order finding algorithm} show the precedure of  order-finding algorithm.

\begin{algorithm}[h!]
\caption{Order-finding algorithm}
\label{Order_finding_algorithm}
\textbf{Input}: Positive integers $N$ and $a$ with $gcd(N,a)=1$.\\
\textbf{Output}: The order $r$ of $a$ modulo $N$.\\
\textbf{Procedure}:
\begin{algorithmic}[1]
\STATE Create initial state $|0\rangle^{\otimes t}|1\rangle$:\\
\quad $t=2L+1+\lceil\log_2(2+\dfrac{1}{2\epsilon})\rceil$ and the second register has $L$ qubits.
\STATE Apply $H^{\otimes t}$ to the first register:\\
 \quad $H^{\otimes t}|0\rangle^{\otimes t}|1\rangle=\dfrac{1}{\sqrt{2^t}}\sum\limits_{j=0}^{2^t-1}|j\rangle|1\rangle$.
\STATE Apply $C_t(M_a)$:\\
  \quad$C_t(M_a)\dfrac{1}{\sqrt{2^t}}\sum\limits_{j=0}^{2^t-1}|j\rangle|1\rangle=\dfrac{1}{\sqrt{2^t}}\sum\limits_{j=0}^{2^t-1}|j\rangle M^j (\dfrac{1}{\sqrt{r}}\sum\limits_{s=0}^{r-1}|u_s\rangle)=\dfrac{1}{\sqrt{r2^t}}\sum\limits_{s=0}^{r-1}\sum\limits_{j=0}^{2^t-1}|j\rangle e^{2\pi ij\frac{s}{r}}|u_s\rangle$.
\STATE Apply $QFT^{-1}$:\\
 \quad $QFT^{-1}\dfrac{1}{\sqrt{r2^t}}\sum\limits_{s=0}^{r-1}\sum\limits_{j=0}^{2^t-1}|j\rangle e^{2\pi ij\frac{s}{r}}|u_s\rangle=\dfrac{1}{\sqrt{r}}\sum\limits_{s=0}^{r-1}|\psi_{t,s/r}\rangle|u_s\rangle$
\STATE Measure the first register:\\
\quad obtain a $t$-bit string $m$ that is an estimation of $\dfrac{s}{r}$ for some $s$.
\STATE Apply continued fractions algorithm:\\
\quad obtain $r$.
\end{algorithmic}
\end{algorithm}

\begin{figure}[h!]
	\centering
	\includegraphics[width=0.6\textwidth]{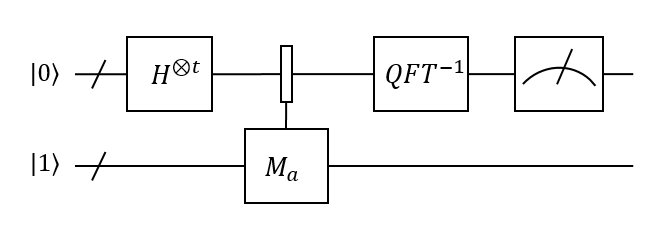}
	\caption{\label{fig:order finding algorithm} Circuit for order-finding algorithm}
\end{figure}

	The purpose of steps 1 to 5 in Algorithm \ref{Order_finding_algorithm} is to get a measuring result $m$ such that $m$ is an estimation of $\dfrac{s}{r}$ for some $s\in\{0,1,\cdots,r-1\}$ (i.e. $|\dfrac{m}{2^t}-\dfrac{s}{r}|\leq 2^{-(2L+1)}$). Let $\{P_i\}$ be any projective measurement on $\mathbb{C}^{2^t}$ and let $|\phi_s\rangle$ be any $t$-qubit quantum state for $s=0,1,\cdots,r-1$. By Equation (\ref{us_orthonormal}), we have
\begin{equation}\label{measure_orthonormal}
\|(P_j\otimes I)\sum_{s=0}^{r-1} |\phi_s\rangle|u_s\rangle\|^2=\sum_{s=0}^{r-1}\|(P_j |\phi_s\rangle)|u_s\rangle\|^2
\end{equation}
for $P_j\in \{P_i\}$. Hence by Propositon \ref{phase_estimation_result} and Equation (\ref{measure_orthonormal}), we can obtain the following proposition immediately.

\begin{proposition}[See \cite{nielsen2000quantum}]\label{order_finding_result}
In {\rm Algorithm \ref{Order_finding_algorithm}}, the probability of
 $d_t(m,Bits(\dfrac{s}{r},t))<2^{t-(2L+1)}$ for any fixed $s\in \{0,1,\cdots,r-1\}$ is at least $\dfrac{1-\epsilon}{r}$. Thus the probability that there exists an $s\in \{0,1,\cdots,r-1\}$ such that
\begin{equation}
d_t(m,FBits(\dfrac{s}{r},1,t))<2^{t-(2L+1)}
\end{equation} is at least $1-\epsilon$.
\end{proposition}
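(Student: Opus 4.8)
The plan is to read off the pre-measurement state $\frac{1}{\sqrt r}\sum_{s=0}^{r-1}|\psi_{t,s/r}\rangle|u_s\rangle$ from step~4 of Algorithm~\ref{Order_finding_algorithm} and to exploit two features of it. First, by the definition of $|\psi_{t,\omega}\rangle$, the branch $|\psi_{t,s/r}\rangle$ is exactly the first-register state that the phase estimation algorithm (Algorithm~\ref{Phase_estimation_algorithm}) produces when it is run on a genuine eigenstate of eigenphase $\omega=s/r$; hence Proposition~\ref{phase_estimation_result} governs the outcome distribution obtained by measuring $|\psi_{t,s/r}\rangle$ alone. Second, because the $|u_s\rangle$ are orthonormal (Equation~(\ref{us_orthonormal})), measuring the first register decouples across the index $s$ in the precise sense of Equation~(\ref{measure_orthonormal}). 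Combining these is what transfers the single-eigenstate guarantee to the full superposition.

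Concretely, recall $t=2L+1+\lceil\log_2(2+\frac{1}{2\epsilon})\rceil$, so that $n=2L+1$ in the language of Proposition~\ref{phase_estimation_result}. For a fixed $s$ let $G_s=\{m:\ d_t(m,FBits(\frac{s}{r},1,t))<2^{t-(2L+1)}\}$ be the set of good outcomes and let $P_{G_s}=\sum_{m\in G_s}|m\rangle\langle m|$ be the corresponding projector. Applying Equation~(\ref{measure_orthonormal}) with $P_j=P_{G_s}$ and $|\phi_{s'}\rangle=\frac{1}{\sqrt r}|\psi_{t,s'/r}\rangle$ gives
\begin{equation}
\Pr[m\in G_s]=\frac{1}{r}\sum_{s'=0}^{r-1}\|P_{G_s}|\psi_{t,s'/r}\rangle\|^2 .
\end{equation}
Dropping every term except $s'=s$ (all terms are nonnegative) and recognizing $\|P_{G_s}|\psi_{t,s/r}\rangle\|^2\geq 1-\epsilon$ as precisely the success probability of Proposition~\ref{phase_estimation_result} for $\omega=s/r$ yields $\Pr[m\in G_s]\geq\frac{1-\epsilon}{r}$, which is the first assertion.

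For the second assertion I would avoid any disjointness argument and instead use monotonicity of the projectors. Let $G=\bigcup_{s}G_s$ and $P_G=\sum_{m\in G}|m\rangle\langle m|$. Since $G_{s'}\subseteq G$ for every $s'$, the difference $P_G-P_{G_{s'}}$ is again a projector, so $\|P_G|\psi_{t,s'/r}\rangle\|^2\geq\|P_{G_{s'}}|\psi_{t,s'/r}\rangle\|^2\geq 1-\epsilon$ holds term by term by Proposition~\ref{phase_estimation_result}. Substituting $P_j=P_G$ into the analogue of the displayed identity gives
\begin{equation}
\Pr[\exists s:\ m\in G_s]=\frac{1}{r}\sum_{s'=0}^{r-1}\|P_G|\psi_{t,s'/r}\rangle\|^2\geq\frac{1}{r}\cdot r\cdot(1-\epsilon)=1-\epsilon ,
\end{equation}
which is the second assertion. (If one prefers, disjointness of the $G_s$ for distinct $s$, which follows from $|\frac{s}{r}-\frac{s'}{r}|\geq\frac{1}{r}>2^{-L}$ together with the choice $n=2L+1$, would instead let one sum the per-$s$ bounds; but this is not needed.)

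The one genuinely non-formal step, and hence the thing to justify carefully, is the branchwise transfer in the first paragraph: that measuring the first register of $\frac{1}{\sqrt r}\sum_s|\psi_{t,s/r}\rangle|u_s\rangle$ reproduces, for each $s$ separately, the statistics of phase estimation run on the lone eigenstate $|u_s\rangle$. This is exactly the content of Equation~(\ref{measure_orthonormal}), which in turn rests on the orthonormality~(\ref{us_orthonormal}) of the $|u_s\rangle$. Once that identity is in hand, both parts are routine bookkeeping, supplemented only by the elementary observation that enlarging the accepted outcome set can only increase a measurement probability.
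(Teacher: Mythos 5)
Your proof is correct and follows essentially the same route the paper intends: the paper derives this proposition ``immediately'' from Proposition \ref{phase_estimation_result} together with the decoupling identity (\ref{measure_orthonormal}), which is exactly the combination you make explicit. Your use of projector monotonicity for the union bound (rather than disjointness of the good sets) is a minor, equally valid variation on the standard bookkeeping.
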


	Although it is an important part to discuss the probability of obtaining $r$ correctly from the measuring result by applying continued fractions algorithm, the details are omitted here and we focus on considering whether the measuring result is an estimation of $\dfrac{s}{r}$ for some $s$, since this is exactly the goal of the quantum part in the order-finding algorithm.

\subsection{Quantum teleportation}
	Quantum teleportation is an important means to realize quantum communication \cite{bennett1993teleporting,ni2007quantum}. Quantum teleportation is effectively equivalent to physically teleporting qubits, but in fact, the realization of quantum teleportation only requires classical communication and both parties to share an EPR pair in advance. The following result is useful.
\begin{theorem}[\cite{bennett1993teleporting}]\label{quantum_telepotation}
When Alice and Bob share $L$ pairs of EPR pairs, they can simulate transmitting $L$ qubits by communicating $2L$ classical bits.
\end{theorem}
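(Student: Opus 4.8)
The plan is to reduce the $L$-qubit statement to the single-qubit teleportation primitive and then lift it to the general case by linearity. First I would establish the one-qubit version. Suppose Alice wishes to send an arbitrary qubit $|\psi\rangle=\alpha|0\rangle+\beta|1\rangle$ to Bob, and that they share one EPR pair $\frac{1}{\sqrt{2}}(|00\rangle+|11\rangle)$, with Alice holding the first half and Bob the second. Alice applies a CNOT controlled by $|\psi\rangle$ onto her half of the EPR pair, followed by a Hadamard on $|\psi\rangle$. A direct expansion of the resulting three-qubit state gives
\begin{equation}
\frac{1}{2}\sum_{m_1,m_2\in\{0,1\}}|m_1\rangle|m_2\rangle\,\bigl(Z^{m_1}X^{m_2}|\psi\rangle\bigr),
\end{equation}
where the first two registers belong to Alice and the last to Bob (global phases suppressed).

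Alice then measures her two qubits in the computational basis, obtaining a uniformly random pair of classical bits $m_1m_2$ and collapsing Bob's qubit to $Z^{m_1}X^{m_2}|\psi\rangle$. She transmits $m_1$ and $m_2$ to Bob, who applies the Pauli correction $X^{m_2}Z^{m_1}$ to recover $|\psi\rangle$ exactly. This subprotocol consumes exactly one EPR pair and two classical bits.

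For the $L$-qubit claim I would run this primitive independently on each of the $L$ qubits Alice wants to transmit, assigning a distinct shared EPR pair to each qubit and communicating the corresponding two classical bits. This uses precisely $L$ EPR pairs and $2L$ classical bits, matching the bound in the statement.

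The main subtlety, and the step I expect to require the most care, is correctness when the $L$ qubits are not in a product state but are entangled with one another or with some external register that must be preserved. The key observation is that single-qubit teleportation realizes the identity channel on the teleported qubit, and because Alice's gates and measurement act only on the message qubit and her EPR half, they commute with the identity on all untouched systems. By linearity of the protocol in the input state, applying it qubit-by-qubit therefore transfers the entire joint state to Bob while preserving every correlation, which completes the argument.
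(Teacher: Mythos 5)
Your proof is correct and is the standard Bennett--Brassard--Cr\'epeau--Jozsa--Peres--Wootters argument; the paper itself offers no proof of this theorem, simply importing it by citation, so there is nothing to compare against beyond noting that your derivation is the canonical one. Two small remarks. First, your writing the post-measurement state as $Z^{m_1}X^{m_2}|\psi\rangle$ with correction $X^{m_2}Z^{m_1}$ (rather than state $X^{m_2}Z^{m_1}|\psi\rangle$ with correction $Z^{m_1}X^{m_2}$) differs from the usual bookkeeping only by the global phase $(-1)^{m_1m_2}$, which you explicitly suppress, so the composition still yields the identity. Second, the ``main subtlety'' you flag --- that teleportation implements the identity channel and hence preserves entanglement with untouched registers --- is precisely the property this paper relies on: in Algorithm 3 the register $C$ being teleported is entangled with register $A$ through the superposition $\frac{1}{\sqrt{r}}\sum_s|\psi_{t_1,s/r}\rangle|u_s\rangle$, and the correctness of the distributed algorithm would collapse if teleportation only worked for product states. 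So your emphasis there is well placed, not merely pedantic.
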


\section{Distributed order-finding algorithm} \label{sec:distributed order finding algorithm}
	In \cite{li2017application}, a  distributed phase estimation algorithm was proposed, but the method in \cite{li2017application} can not guarantee the precision of the result. However, their ideas deserve further consideration. In this section, by combining with quantum teleportation, we proposed a distributed order-finding algorithm and prove the correctness of our algorithm.

	Without loss of generality, assume that $L=\lceil\log_2(N)\rceil$ is even. The idea of our distributed order-finding algorithm is as follows. We need two quantum computers (named $A$ and $B$). We first apply order-finding algorithm in computer $A$ and obtain an estimation of the first $\dfrac{L}{2}+1$ bits of $\dfrac{s}{r}$ for some $s\in\{0,1,\cdots,r-1\}$, and similarly obtain an estimation of the $(\dfrac{L}{2}+2)$th bit to $(2L+1)$th bit of $\dfrac{s}{r}$ in computer $B$. We can realize this by using $C_t(M_a^{2^l})$, since $M_a^{2^l}|u_s\rangle=e^{2\pi i(2^l\frac{s}{r})} |u_s\rangle$ and the fractional part of $2^l\dfrac{s}{r}$ starts at the $(l+1)$th bit of he fractional part of $\dfrac{s}{r}$. Moreover, since $M_a^{2^l}=M_{a^{2^l} {\rm mod}\ N}$ and we can calculate $a^{2^l} \bmod\ N$ classically with time complexity $O(l)$,  we can construct $C_t(M_a^{2^l})$ with the same way as $C_t(M_a)$. In addition, to guarantee the measuring results of $A$ and $B$ corresponding to the same $\dfrac{s}{r}$, we need quantum teleportation.

	However, in order to maintain high precision, computer $B$ actually estimates the $\dfrac{L}{2}$th bit to  $(2L+1)$th bit, where the estimation of the $\dfrac{L}{2}$th bit and the $(\dfrac{L}{2}+1)$th bit is used to ``correct" the measuring result of $A$. This ``correction" operation is handed over to a classical subroutine named $CorrectResults$. Our distributed order-finding algorithm is shown in Algorithm \ref{Distributed_order_finding} and  Figure \ref{fig:distributed order finding algorithm}, and the subroutine $CorrectResults$ is shown in Algorithm \ref{CorrectResults subroutine}

\begin{figure}[h!]
	\centering
	\includegraphics[width=1\textwidth]{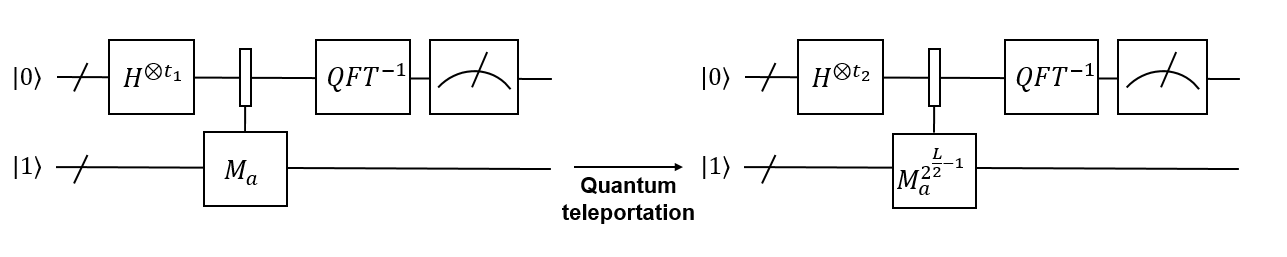}
	\caption{\label{fig:distributed order finding algorithm} Circuit for distributed order finding algorithm}
\end{figure}

\begin{algorithm}[h!]
\caption{Distributed order-finding algorithm}
\label{Distributed_order_finding}
\textbf{Input}: Positive integers $N$ and $a$ with $gcd(N,a)=1$.\\
\textbf{Output}: The order $r$ of $a$ modulo $N$.\\
\textbf{Procedure}:
\begin{algorithmic}[1]
\STATE Computer $A$ creates initial state $|0\rangle_{A}|1\rangle_C$. Computer $B$ creates initial state $|0\rangle_{B}$: \\
 \quad Here registers $A$, $B$ and $C$ are $t_1$-qubit, $t_2$-qubit and $L$-qubit, respectively. We take \\$t_1=\dfrac{L}{2}+1+p$ and $t_2=\dfrac{3L}{2}+2+p$, where $p=\lceil\log_2(2+\dfrac{1}{2\epsilon'}\rceil$ and $\epsilon'=\dfrac{\epsilon}{2}$.\\

\textbf{Computer  $A$}:
\STATE Apply $H^{\otimes t_1}$ to register $A$:
 \quad $\rightarrow \dfrac{1}{\sqrt{r}}\sum\limits_{s=0}^{r-1} (H^{\otimes t_1}|0\rangle|u_s\rangle) |0\rangle$.

\STATE Apply $C_{t_1}(M_a)$ to registers $A$ and $C$:
  \quad$\rightarrow \dfrac{1}{\sqrt{r}}\sum\limits_{s=0}^{r-1} (\dfrac{1}{\sqrt{2^{t_1}}}\sum\limits_{j=0}^{2^{t_1}-1}e^{2\pi ij\frac{s}{r}}|j\rangle |u_s\rangle) |0\rangle$.

\STATE Apply $QFT^{-1}$to register $A$:
 \quad $\rightarrow \dfrac{1}{\sqrt{r}}\sum\limits_{s=0}^{r-1}|\psi_{t_1,s/r}\rangle|u_s\rangle  |0\rangle$

\STATE Teleport the qubits of register $C$ to computer $B$:
 \quad  $\rightarrow \dfrac{1}{\sqrt{r}}\sum\limits_{s=0}^{r-1}|\psi_{t_1,s/r}\rangle  |0\rangle|u_s\rangle$

\textbf{Computer  $B$}:
\STATE Apply $H^{\otimes t_2}$ to register $B$:
\quad $\rightarrow \dfrac{1}{\sqrt{r}}\sum\limits_{s=0}^{r-1}|\psi_{t_1,s/r}\rangle  H^{\otimes t_2}|0\rangle|u_s\rangle$

\STATE Apply $C_{t_2}(M_a^{2^{\frac{L}{2}-1}})$ to registers $B$ and $C$:
\quad $\rightarrow \dfrac{1}{\sqrt{r}}\sum\limits_{s=0}^{r-1}|\psi_{t_1,s/r}\rangle (\dfrac{1}{\sqrt{2^{t_2}}}\sum\limits_{j=0}^{2^{t_2}-1}e^{2\pi ij(2^{\frac{L}{2}-1}\frac{s}{r})}|j\rangle) |u_s\rangle$

\STATE Apply $QFT^{-1}$to register $B$:
 \quad $\rightarrow |\phi_{final}\rangle=\dfrac{1}{\sqrt{r}}\sum\limits_{s=0}^{r-1}|\psi_{t_1,s/r}\rangle |\psi_{t_2,2^{\frac{L}{2}-1}s/r}\rangle |u_s\rangle$
\STATE Computer $A$ measures register $A$ and computer $B$ measures register $B$:\\
\quad $A$ obtains a $t_1$-bit string $m_1$ and $B$ obtains a $t_2$-bit string $m_2$.
\STATE $m\leftarrow CorrectResults(m_1,m_2)$: $m$ is a $(2L+1+p)$-bit string.
\STATE Apply continued fractions algorithm: obtain $r$.
\end{algorithmic}
\end{algorithm}

\begin{algorithm}[h!]
\caption{CorrectResults subroutine}
\label{CorrectResults subroutine}
\textbf{Input}: Two measuring results: $t_1$-bit string $m_1$ and $t_2$-bit string $m_2$.\\
\textbf{Output}: An estimation $m$ such that $|\dfrac{m}{2^{(2L+1+p)}}-\dfrac{s}{r}|\leq 2^{-(2L+1)}$ for some $s\in\{0,1,\cdots,r-1\}$.\\
\textbf{Procedure}:
\begin{algorithmic}[1]
\STATE Choose $CorrectionBit\in\{-1,0,1\}$ such that\\
\quad $(IBits(m_1,\dfrac{L}{2},\dfrac{L}{2}+1)+ CorrectionBit) \bmod 2^2=IBits(m_2,1,2)$.
\STATE $m_{prefix}\leftarrow (IBits(m_1,1,\dfrac{L}{2}+1)+ CorrectionBit)\bmod 2^{\frac{L}{2}+1}$
\STATE $m\leftarrow m_{prefix}\circ IBits(m_2,3,t_2)$ (``$\circ$" represents catenation)
\STATE return $m$
\end{algorithmic}
\end{algorithm}

\begin{remark}
Although Algorithm \ref{Distributed_order_finding} is a serial algorithm, the two computers can also execute in parallel to some extent. For example, execute the algorithm in the following order: 1, (2, 6), 3, 5, 7, (4, 8), 9, 10, 11, where $i$ represents the $i$th step in Algorithm \ref{Distributed_order_finding}, and $(i, j)$ means that the $i$th and $j$th steps are executed in parallel.
\end{remark}


	Next we prove the correctness of our algorithm, that is, we can obtain the output $m$ such that $|\dfrac{m}{2^{(2L+1+p)}}-\dfrac{s}{r}|\leq 2^{-(2L+1)}$ holds for some $s\in\{0,1,\cdots,r-1\}$ with high probability. Let $r ,L, t_1,t_2,p,m_1,m_2,m_{prefix}, m,\epsilon',|\phi_{final}\rangle$ be the same as those in Algorithm \ref{Distributed_order_finding} and Algorithm \ref{CorrectResults subroutine}. We first prove that if $m_1$ and $m_2$ are both estimations of some bits of $\dfrac{s_0}{r}$ with $\dfrac{s_0}{r}=0.a_1a_2\cdots a_{\frac{L}{2}+1}$, then the output $m$ is perfect (i.e. $m=a_1 a_2\cdots a_{\frac{L}{2}}a_{\frac{L}{2}+1}0\cdots 0$), and the probability of this case is not less than $\dfrac{1}{r}$.

\begin{proposition}\label{s/r_integer}
Let $s_0\in \{0,1,\cdots, r-1\}$ satisfy that $2^{\frac{L}{2}+1}\cdot\dfrac{s_0}{r}$ is an integer, that is, $\dfrac{s_0}{r}=0.a_1a_2\cdots a_{\frac{L}{2}+1}$ where $a_i\in\{0,1\}$, $i=1,2,\cdots,\dfrac{L}{2}+1$. Then in {\rm Algorithm \ref{Distributed_order_finding}}, it holds that
\begin{equation}
{\rm Prob}(m=a_1 a_2\cdots a_{\frac{L}{2}+1}0\cdots 0)\geq \dfrac{1}{r}.
\end{equation}
\end{proposition}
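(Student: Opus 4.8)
The plan is to exploit the hypothesis that $\frac{s_0}{r}$ terminates after exactly $\frac{L}{2}+1$ binary digits, so that on the $s_0$ branch both inverse-Fourier registers collapse to exact computational basis states and the two measurements are deterministic. The workhorse is the elementary fact, immediate from the definition of $|\psi_{t,\omega}\rangle$ together with Equation (\ref{inverse_QFT}): since the defining sum only involves $e^{2\pi ij\omega}$, the state $|\psi_{t,\omega}\rangle$ depends on $\omega$ only through $\omega \bmod 1$, and whenever $(\omega \bmod 1) = \frac{k}{2^t}$ for an integer $k\in\{0,\dots,2^t-1\}$ one has $|\psi_{t,\omega}\rangle=|k\rangle$. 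The proof then splits into three parts: identifying the deterministic outcomes $k_A,k_B$ of the two registers, bounding the probability of that joint outcome below by $\frac{1}{r}$, and checking that $CorrectResults(k_A,k_B)$ returns the claimed perfect string.

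For register $A$: because $2^{\frac{L}{2}+1}\cdot\frac{s_0}{r}$ is an integer and $t_1=\frac{L}{2}+1+p\ge\frac{L}{2}+1$, the number $2^{t_1}\cdot\frac{s_0}{r}=2^p\bigl(2^{\frac{L}{2}+1}\cdot\frac{s_0}{r}\bigr)$ is an integer, so $|\psi_{t_1,s_0/r}\rangle=|k_A\rangle$ with $k_A=a_1a_2\cdots a_{\frac{L}{2}+1}0\cdots0$ ($p$ trailing zeros). For register $B$ the relevant phase is $2^{\frac{L}{2}-1}\frac{s_0}{r}$; its fractional part is exactly $0.a_{\frac{L}{2}}a_{\frac{L}{2}+1}$, a two-bit number, so the same fact gives $|\psi_{t_2,2^{L/2-1}s_0/r}\rangle=|k_B\rangle$ with $k_B=a_{\frac{L}{2}}a_{\frac{L}{2}+1}0\cdots0$. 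This bit-alignment bookkeeping — checking that multiplying $\frac{s_0}{r}$ by $2^{\frac{L}{2}-1}$ slides the binary point so that $B$ reads precisely the overlap bits $a_{\frac{L}{2}},a_{\frac{L}{2}+1}$ followed by zeros, and that $t_2\ge 2$ suffices to capture them exactly — is the step I expect to demand the most care.

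For the probability, I would apply the joint projector $|k_A\rangle\langle k_A|\otimes|k_B\rangle\langle k_B|\otimes I$ to $|\phi_{final}\rangle$ and take the squared norm. Because $\langle u_s|u_{s'}\rangle=\delta_{s,s'}$ (Equation (\ref{us_orthonormal})), the cross terms vanish and
\[
\mathrm{Prob}(k_A,k_B)=\frac{1}{r}\sum_{s=0}^{r-1}\bigl|\langle k_A|\psi_{t_1,s/r}\rangle\bigr|^2\bigl|\langle k_B|\psi_{t_2,2^{L/2-1}s/r}\rangle\bigr|^2.
\]
Every summand is nonnegative, and by the previous paragraph the $s=s_0$ summand equals $\frac{1}{r}\cdot1\cdot1$, so the whole sum is at least $\frac{1}{r}$.

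Finally I would run the subroutine on $(k_A,k_B)$: both $IBits(k_A,\frac{L}{2},\frac{L}{2}+1)$ and $IBits(k_B,1,2)$ equal $a_{\frac{L}{2}}a_{\frac{L}{2}+1}$, which forces $CorrectionBit=0$; hence $m_{prefix}=IBits(k_A,1,\frac{L}{2}+1)=a_1\cdots a_{\frac{L}{2}+1}$, while $IBits(k_B,3,t_2)$ is all zeros, so $m=a_1\cdots a_{\frac{L}{2}+1}0\cdots0$ of length $(\frac{L}{2}+1)+(t_2-2)=2L+1+p$, exactly the claimed perfect output. Combining this with the probability bound yields $\mathrm{Prob}(m=a_1\cdots a_{\frac{L}{2}+1}0\cdots0)\ge\frac{1}{r}$ and completes the proof.
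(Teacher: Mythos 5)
Your proposal is correct and follows essentially the same route as the paper's proof: identify the two exact basis states $|a_1\cdots a_{\frac{L}{2}+1}0\cdots 0\rangle$ and $|a_{\frac{L}{2}}a_{\frac{L}{2}+1}0\cdots 0\rangle$ produced on the $s_0$ branch, lower-bound the joint measurement probability by the $s=s_0$ term of the orthogonal sum to get $\frac{1}{r}$, and check that $CorrectResults$ returns the perfect string. Your version merely spells out two steps the paper leaves implicit (the $\omega \bmod 1$ reduction for register $B$ and the explicit $CorrectionBit=0$ verification), which is a welcome but not substantive difference.
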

\begin{proof}
Since the fractional part of $\dfrac{s_0}{r}$ is at most $(\dfrac{L}{2}+1)$-bit, in {\rm Algorithm \ref{Distributed_order_finding}}, we have
\begin{equation}
|\psi_{t_1,s_0/r}\rangle=|a_1 a_2\cdots a_{\frac{L}{2}+1}0\cdots 0\rangle
\end{equation}
 and
\begin{equation}
|\psi_{t_2,2^{\frac{L}{2}-1}s_0/r}\rangle=|a_{\frac{L}{2}}a_{\frac{L}{2}+1}0\cdots 0\rangle.
\end{equation}
Let $x=a_1 a_2\cdots a_{\frac{L}{2}}a_{\frac{L}{2}+1}0\cdots 0$ and $y=a_{\frac{L}{2}}a_{\frac{L}{2}+1}0\cdots 0$. By Equation (\ref{measure_orthonormal}), we have
\begin{align}
{\rm Prob}(m_1=x\ \text{and}\ m_2=y)&=\|| x\rangle \langle x| \otimes |y\rangle \langle y|\otimes I\ |\phi_{final}\rangle\|^2\\
&\geq\|| x\rangle \langle x| \otimes |y\rangle \langle y|\otimes I \ \dfrac{1}{\sqrt{r}}|\psi_{t_1,s_0/r}\rangle|\psi_{t_2,2^{\frac{L}{2}-1}s_0/r}\rangle|u_{s_0}\rangle\|^2\\
&= \dfrac{1}{r}.
\end{align}
Since $CorrectResults(x,y)=a_1 a_2\cdots a_{\frac{L}{2}}a_{\frac{L}{2}+1}0\cdots 0$, the lemma holds.
\end{proof}

	Then we prove that if $m_2$ is an estimation of the $\dfrac{L}{2}$th to $(2L+1)$th bit of $\dfrac{s_0}{r}$, we can get $IBits(m_2,1,2)=FBits(\dfrac{s_0}{r},\dfrac{L}{2},\dfrac{L}{2}+1)$.

\begin{lemma}\label{m_2}
Let $s_0\in\{0,1,\cdots, r-1 \}$ satisfy that $2^{\frac{L}{2}+1}\cdot\dfrac{s_0}{r}$ is not an integer and let $m_2$ satisfy
\begin{equation}\label{eq:m2}
d_{t_2}(m_2,FBits(\dfrac{s_0}{r},\dfrac{L}{2},2L+1+p))<2^{p}.
\end{equation}
Then $IBits(m_2,1,2)=FBits(\dfrac{s_0}{r},\dfrac{L}{2},\dfrac{L}{2}+1)$.
\end{lemma}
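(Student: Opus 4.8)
\textbf{Proof plan for Lemma \ref{m_2}.}
The plan is to pass from the integer bit-strings to the underlying real phase and argue on the circle $\mathbb{R}/\mathbb{Z}$. Set $\theta$ to be the fractional part of $2^{\frac{L}{2}-1}\frac{s_0}{r}$; this is exactly the phase that register $B$ estimates, since $M_a^{2^{\frac{L}{2}-1}}|u_{s_0}\rangle=e^{2\pi i\theta}|u_{s_0}\rangle$. Writing $\frac{s_0}{r}=0.b_1b_2\cdots$, we have $\theta=0.b_{\frac{L}{2}}b_{\frac{L}{2}+1}\cdots$, and a direct index check gives $FBits(\theta,1,t_2)=FBits(\frac{s_0}{r},\frac{L}{2},2L+1+p)$, the target string in the hypothesis. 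Observe also that $IBits(m_2,1,2)$ is the pair of top bits of $\frac{m_2}{2^{t_2}}$ while $FBits(\frac{s_0}{r},\frac{L}{2},\frac{L}{2}+1)=b_{\frac{L}{2}}b_{\frac{L}{2}+1}$ is the pair of top bits of $\theta$; equivalently, each records which of the four arcs $[\frac{k}{4},\frac{k+1}{4})$, $k\in\{0,1,2,3\}$, contains the corresponding point. Thus the claim is precisely that $\frac{m_2}{2^{t_2}}$ and $\theta$ lie in the same quadrant.

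First I would convert the hypothesis $d_{t_2}(m_2,FBits(\theta,1,t_2))<2^{p}$ into an estimate on the circle. Applying Lemma \ref{result_acuracy} with $t=t_2$, $\omega=\theta$, and $n=t_2-p=\frac{3L}{2}+2$ shows that the circular distance between $\frac{m_2}{2^{t_2}}$ and $\theta$ is at most $2^{-(\frac{3L}{2}+2)}$ (the two alternatives there correspond to whether the estimate wraps around $1$).

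The heart of the argument, and the place where the assumption that $2^{\frac{L}{2}+1}\frac{s_0}{r}$ is not an integer is indispensable, is to show that $\theta$ stays strictly farther than this error from every quadrant boundary $\frac{k}{4}$. I would write $2^{\frac{L}{2}+1}s_0=qr+c$ with $c=2^{\frac{L}{2}+1}s_0\bmod r$; non-integrality forces $c\in\{1,\dots,r-1\}$. Then $4\theta=(q\bmod 4)+\frac{c}{r}$, so the fractional part of $4\theta$ equals $\frac{c}{r}$, which is at distance at least $\frac{1}{r}$ from the integers. Dividing by $4$, the point $\theta$ lies at circular distance at least $\frac{1}{4r}$ from each multiple of $\frac{1}{4}$. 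Since $r\le 2^{L}<2^{\frac{3L}{2}}$, we get $\frac{1}{4r}>2^{-(\frac{3L}{2}+2)}$.

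Finally I would combine the two bounds: $\frac{m_2}{2^{t_2}}$ lies within $2^{-(\frac{3L}{2}+2)}$ of $\theta$, while $\theta$ is farther than $2^{-(\frac{3L}{2}+2)}$ from every quadrant boundary, so no boundary separates the two points and neither sits on a boundary. Hence they occupy the same arc $[\frac{k}{4},\frac{k+1}{4})$, their top two bits coincide, and $IBits(m_2,1,2)=b_{\frac{L}{2}}b_{\frac{L}{2}+1}=FBits(\frac{s_0}{r},\frac{L}{2},\frac{L}{2}+1)$, as required. The main obstacle is this separation estimate: a crude application of Lemma \ref{d_t}(III) only shows that the top two bits differ by at most $1$ in the $d_2$ metric, and eliminating the off-by-one case is exactly what the non-integrality hypothesis and the $\frac{1}{4r}$ gap achieve.
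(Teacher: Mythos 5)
Your argument is correct, and it reaches the paper's conclusion by a genuinely different route. The paper stays entirely at the level of bit strings: it derives from non-integrality the same inequality you use, namely $\dfrac{1}{r}\leq \dfrac{2^{\frac{L}{2}+1}s_0 \bmod r}{r}\leq \dfrac{r-1}{r}$, concludes that $FBits(\dfrac{s_0}{r},\dfrac{L}{2}+2,2L+1)$ is neither $00\cdots0$ nor $11\cdots1$, and then argues that the $\pm 1$ perturbation permitted by Lemma \ref{d_t}(III) cannot carry past these bits into the top two positions. You instead pass to the real phase $\theta$ on the circle via Lemma \ref{result_acuracy}, and replace the carry-blocking observation by a quantitative separation: $\theta$ sits at circular distance at least $\dfrac{1}{4r}$ from every quadrant boundary, while the measurement error is at most $2^{-(\frac{3L}{2}+2)}<\dfrac{1}{4r}$. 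These are two formalizations of the same underlying fact (the middle bits not being all equal is exactly the statement that $\theta$ is bounded away from the boundaries by more than the perturbation), but the mechanics differ: the paper's version is combinatorial and leans on Lemma \ref{d_t}(III), whereas yours is metric and makes the margin explicit, which arguably makes it easier to see where the hypothesis $r<2^L$ enters and how much slack remains. Your closing remark correctly identifies why a bare application of Lemma \ref{d_t}(III) is insufficient and why non-integrality is indispensable; all the index bookkeeping ($FBits(\theta,1,t_2)=FBits(\dfrac{s_0}{r},\dfrac{L}{2},2L+1+p)$, the quadrant interpretation of $IBits(\cdot,1,2)$, and the identity $\{4\theta\}=\dfrac{c}{r}$) checks out.
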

\begin{proof}
Since $2^{\frac{L}{2}+1}\cdot\dfrac{s_0}{r}$ is not an integer, we have
\begin{equation}\label{eq:s/r eatimate}
2^{-L}<\dfrac{1}{r}\leq \dfrac{2^{\frac{L}{2}+1}s_0 \bmod r}{r}\leq \dfrac{r-1}{r}<1-2^{-L}.
\end{equation}
 So we get $FBits(\dfrac{s_0}{r},\dfrac{L}{2}+2,\dfrac{3L}{2}+1)$ is not $00\cdots 0$ or $11 \cdots 1$. Hence, $FBits(\dfrac{s_0}{r},\dfrac{L}{2}+2, 2L+1)$ is not $00\cdots 0$ or $11 \cdots 1$. That is to say, if we add or subtract 1 to $FBits(\dfrac{s_0}{r},\dfrac{L}{2},2L+1)$, its first two bits are not changed. Thus by Equation (\ref{eq:m2}), we have
\begin{equation}
IBits(m_2,1,2)=FBits(\dfrac{s_0}{r},\dfrac{L}{2},\dfrac{L}{2}+1).
\end{equation}
Therefore the lemma holds.
\end{proof}

	If $IBits(m_2,1,2)=FBits(\dfrac{s_0}{r},\dfrac{L}{2},\dfrac{L}{2}+1)$, that is, the first two bits of $m_2$ are correct, then we can use these two bits of $m_2$ to ``correct" $m_1$. The following lemma can be used to show the correctness of Algorithm \ref{CorrectResults subroutine}.

\begin{lemma}\label{CorrectResults_correct}
Let $t>2$ be a positive integer and let $x,y$ be two $t$-bit strings with $d_{t}(x,y)\leq 1$. Then there only exists one element $b_0$ in $\{-1,0,1\}$ such that $(x+b_0) \bmod 2^{t}=y$, and for any $b\in\{-1,0,1\}$, $(x+b) \bmod 2^{t}=y$ if and only if $(IBits(x,t-1,t)+b) \bmod 2^2=IBits(y,t-1,t)$.
\end{lemma}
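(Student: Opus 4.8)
The plan is to treat the two assertions in turn, reducing each to elementary modular arithmetic once Lemma~\ref{d_t}(I) is in hand. The one fact I will use repeatedly is that $IBits(x,t-1,t)$ is exactly $x \bmod 2^2$, i.e. the integer formed by the last two bits of the $t$-bit string $x$ (and likewise for $y$), so the congruence appearing in the statement is just $(x+b)\equiv y \pmod{2^2}$.

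First I would settle the existence and uniqueness of $b_0$. By Lemma~\ref{d_t}(I), the set $B=\{b\in\{-(2^t-1),\dots,2^t-1\}:(x+b)\bmod 2^t=y\}$ satisfies $\min_{b\in B}|b|=d_t(x,y)\le 1$, so some $b_0\in B$ has $|b_0|\le 1$, giving $b_0\in\{-1,0,1\}$ with $(x+b_0)\bmod 2^t=y$. For uniqueness, if $b_0,b_1\in\{-1,0,1\}$ both work then $b_0-b_1\equiv 0\pmod{2^t}$; since $|b_0-b_1|\le 2<2^t$ (this is where $t>2$ enters, although $t\ge 2$ already suffices), we conclude $b_0=b_1$.

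Next I would prove the stated equivalence. The forward direction is immediate: from $(x+b)\bmod 2^t=y$ I reduce modulo $2^2$, and because $2^2\mid 2^t$ this yields $(IBits(x,t-1,t)+b)\bmod 2^2=IBits(y,t-1,t)$. The backward direction is the delicate point, because the equivalence is genuinely false without the hypothesis $d_t(x,y)\le 1$ (two strings far apart on the circle can still agree in their last two bits). Rather than computing directly, I would route through uniqueness: let $b_0$ be the unique element of $\{-1,0,1\}$ with $(x+b_0)\bmod 2^t=y$, so the forward direction already gives $(IBits(x,t-1,t)+b_0)\bmod 2^2=IBits(y,t-1,t)$. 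If some $b\in\{-1,0,1\}$ also satisfies the mod-$2^2$ identity, then $b-b_0\equiv 0\pmod{2^2}$, and $|b-b_0|\le 2<4$ forces $b=b_0$, whence $(x+b)\bmod 2^t=y$.

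The only genuine obstacle is this backward implication: any attempt to verify it by direct manipulation of the full $t$-bit arithmetic will fail, since the last two bits alone do not determine $(x+b)\bmod 2^t$ in general. The correct leverage is the uniqueness of $b_0$, after which the two ``collapse'' steps — $b_0-b_1$ a multiple of $2^t$, and $b-b_0$ a multiple of $4$, each forced to vanish by the bound $|\cdot|\le 2$ — are precisely what make the argument close.
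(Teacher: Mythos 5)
Your proof is correct and follows essentially the same route as the paper's: establish existence and uniqueness of $b_0$ via Lemma \ref{d_t}(I), get the forward direction by reducing modulo $2^2$, and get the backward direction from the uniqueness of the element of $\{-1,0,1\}$ satisfying the mod-$4$ congruence. You merely spell out the two uniqueness claims (via $|b-b_0|\le 2<2^t$ and $|b-b_0|\le 2<4$) that the paper asserts as clear.
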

\begin{proof} By Lemma \ref{d_t}, we know that there exists such a $b_0$. It is clear that such a $b_0$ is unique. Next we prove that for any $b\in\{-1,0,1\}$, $(x+b) \bmod 2^{t}=y$ if and only if $(IBits(x,t-1,t)+b) \bmod 2^2=IBits(y,t-1,t)$. For any $b\in\{-1,0,1\}$, suppose $(x+b) \bmod 2^{t}=y$, then we have
\begin{equation}
(x+b) \bmod 2^{2}=y \bmod 2^{2}.
\end{equation}
That is,
\begin{equation}
(IBits(x,t-1,t)+b) \bmod 2^2=IBits(y,t-1,t).
\end{equation}
On the other hand, for any $b\in\{-1,0,1\}$, suppose $(IBits(x,t-1,t)+b) \bmod 2^2=IBits(y,t-1,t)$. Since there only exists one elements $b_1$ in $\{-1,0,1\}$ such that $(IBits(x,t-1,t)+b_1) \bmod 2^2=IBits(y,t-1,t)$,  $b$ is equal to $b_0$, that is, $b$ satisfies $(x+b) \bmod 2^{t}=y$. Consequently, the lemma holds.
\end{proof}

We can inspect Lemma \ref{CorrectResults_correct} from another aspect. If $d_{\frac{L}{2}+1}(m_1,FBits(\dfrac{s_0}{r},1,\dfrac{L}{2}+1))\leq 1$ and $IBits(m_2,1,2)=FBits(\dfrac{s_0}{r},\dfrac{L}{2},\dfrac{L}{2}+1)$ hold for some $s_0$, then the $CorrectionBit$ in Algorithm \ref{CorrectResults subroutine} exists, and $m_{prefix}=FBits(\dfrac{s_0}{r},1,\dfrac{L}{2}+1)$ holds as well.

\begin{proposition}\label{s/r_not_integer}
Let $m_2$ satisfy $d_{t_2}(m_2,FBits(\dfrac{s_0}{r},\dfrac{L}{2},2L+1+p))<2^{p}$ for some $s_0\in\{0,1,\cdots,r-1\}$ with $2^{\frac{L}{2}+1}\cdot\dfrac{s_0}{r}$ being not an integer. Suppose $d_{t_1}(m_1,FBits(\dfrac{s_0}{r},1,t_1))< 2^{p}$. Then $|\dfrac{m}{2^{2L+1+p}}-\dfrac{s_0}{r}|\leq 2^{-(2L+1)}$.
\end{proposition}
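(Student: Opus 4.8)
The plan is to chain the three preparatory facts already established and then reduce the target inequality to a single estimate on the second computer's output. Throughout write $\theta=\tfrac{s_0}{r}$, say $\theta=0.b_1b_2\cdots$ in binary, and let $\phi$ be the fractional part of $2^{\frac{L}{2}-1}\theta$, so $\phi=0.b_{\frac{L}{2}}b_{\frac{L}{2}+1}\cdots$; this is exactly the eigenphase estimated in register $B$, and $FBits(\theta,\frac{L}{2},2L+1+p)=FBits(\phi,1,t_2)$. First I would record the two local-correctness facts. Since $2^{\frac{L}{2}+1}\theta$ is not an integer, Lemma \ref{m_2} applies to the hypothesis on $m_2$ and gives $IBits(m_2,1,2)=FBits(\theta,\frac{L}{2},\frac{L}{2}+1)$, i.e.\ the first two bits of $m_2$ are correct. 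Next, applying Lemma \ref{d_t}(III) to $d_{t_1}(m_1,FBits(\theta,1,t_1))<2^p$ with $t_0=\frac{L}{2}+1$ (so that $2^p=2^{t_1-t_0}$) yields $d_{\frac{L}{2}+1}(IBits(m_1,1,\frac{L}{2}+1),FBits(\theta,1,\frac{L}{2}+1))\le 1$.

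With these two facts, Lemma \ref{CorrectResults_correct} (applied with $x=IBits(m_1,1,\frac{L}{2}+1)$, $y=FBits(\theta,1,\frac{L}{2}+1)$, $t=\frac{L}{2}+1$, as noted in the remark following it) guarantees that $CorrectionBit$ in Algorithm \ref{CorrectResults subroutine} exists and that $m_{prefix}=FBits(\theta,1,\frac{L}{2}+1)$, so the corrected prefix reproduces the first $\frac{L}{2}+1$ fractional bits of $\theta$ exactly. I would then peel off the last two bits of $m_{prefix}$, which equal $FBits(\theta,\frac{L}{2},\frac{L}{2}+1)=IBits(m_2,1,2)$, to obtain the clean structural identity
\[
m=m_{prefix}\circ IBits(m_2,3,t_2)=FBits\!\left(\theta,1,\tfrac{L}{2}-1\right)\circ m_2 .
\]
Dividing by $2^{2L+1+p}$ and using $2^{2L+1+p}/2^{t_2}=2^{\frac{L}{2}-1}$ turns this into the \emph{exact} error formula
\[
\frac{m}{2^{2L+1+p}}-\theta=2^{-(\frac{L}{2}-1)}\left(\frac{m_2}{2^{t_2}}-\phi\right).
\]

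It then remains to bound $\bigl|\tfrac{m_2}{2^{t_2}}-\phi\bigr|$. From $d_{t_2}(m_2,FBits(\phi,1,t_2))<2^p$ I would invoke Lemma \ref{result_acuracy} with $n=t_2-p=\frac{3L}{2}+2$, which gives either $\bigl|\tfrac{m_2}{2^{t_2}}-\phi\bigr|\le 2^{-(\frac{3L}{2}+2)}$ or $1-\bigl|\tfrac{m_2}{2^{t_2}}-\phi\bigr|\le 2^{-(\frac{3L}{2}+2)}$. The second (wrap-around) alternative must be excluded, and this is precisely where the correctness of the first two bits of $m_2$ is used: since $\tfrac{m_2}{2^{t_2}}$ and $\phi$ share their first two fractional bits, both lie in a common half-open dyadic interval of length $\tfrac14$, so $\bigl|\tfrac{m_2}{2^{t_2}}-\phi\bigr|<\tfrac14$, contradicting $1-\bigl|\tfrac{m_2}{2^{t_2}}-\phi\bigr|\le 2^{-(\frac{3L}{2}+2)}\le\tfrac14$. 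Hence the first alternative holds, and substituting into the error formula gives $\bigl|\tfrac{m}{2^{2L+1+p}}-\theta\bigr|\le 2^{-(\frac{L}{2}-1)}\cdot 2^{-(\frac{3L}{2}+2)}=2^{-(2L+1)}$, as required.

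The main obstacle is not a single hard estimate but the bookkeeping that makes the two measurement results fit together: one must check that the bit positions line up so that $m=FBits(\theta,1,\frac{L}{2}-1)\circ m_2$, and that the two-bit overlap (positions $\frac{L}{2}$ and $\frac{L}{2}+1$, estimated by both computers) is consistent. The genuinely delicate point hidden inside this bookkeeping is the exclusion of the wrap-around alternative in Lemma \ref{result_acuracy}: the algorithm deliberately has computer $B$ estimate the two extra high-order bits so that their correctness (Lemma \ref{m_2}) pins $\phi$ and $\tfrac{m_2}{2^{t_2}}$ into the same quarter-interval, thereby ruling out the phase estimate running "the wrong way around the circle."
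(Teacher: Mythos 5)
Your proposal is correct and follows essentially the same route as the paper: Lemma \ref{m_2} to fix the first two bits of $m_2$, Lemma \ref{d_t}(III) plus Lemma \ref{CorrectResults_correct} to show $m_{prefix}=FBits(\frac{s_0}{r},1,\frac{L}{2}+1)$, and then a reduction of the error of $m$ to the error of $m_2$ with the wrap-around case of Lemma \ref{result_acuracy} excluded. The only cosmetic difference is in that last exclusion step: the paper rules out wrap-around by noting $FBits(\frac{s_0}{r},1,2L+1)$ is neither $00\cdots0$ nor $11\cdots1$, whereas you use the agreement of the first two bits of $m_2$ and $\phi$ to confine both to a quarter-interval; both arguments are valid and rest on the same hypothesis that $2^{\frac{L}{2}+1}\cdot\frac{s_0}{r}$ is not an integer.
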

\begin{proof}
Since $d_{t_2}(m_2,FBits(\dfrac{s_0}{r},\dfrac{L}{2},2L+1+p))<2^{p}$ and $2^{\frac{L}{2}+1}\cdot\dfrac{s_0}{r}$ is not an integer, by Lemma $\ref{m_2}$, we have
\begin{equation}\label{m2 first two same}
IBits(m_2,1,2)=FBits(\dfrac{s_0}{r},\dfrac{L}{2},\dfrac{L}{2}+1).
\end{equation}
Since $d_{t_1}(m_1,FBits(\dfrac{s_0}{r},1,t_1))< 2^{p}$, by Lemma \ref{d_t}, we have
\begin{equation}
d_{\frac{L}{2}+1}(IBits(m_1,1,\dfrac{L}{2}+1),FBits(\dfrac{s_0}{r},1,\dfrac{L}{2}+1))\leq 1.
\end{equation}
As a result, in Algorithm \ref{CorrectResults subroutine}, the $CorrectionBit$ exists. By Equation (\ref{m2 first two same}), Lemma \ref{CorrectResults_correct}, and the steps 1 to 2 in Algorithm \ref{CorrectResults subroutine}, we get
\begin{equation}\label{eq:mprefix}
IBits(m,1,\dfrac{L}{2}+1)=m_{prefix}=FBits(\dfrac{s_0}{r},1,\dfrac{L}{2}+1).
\end{equation}
Since $m=m_{prefix}\circ IBits(m_2,3,t_2)$, by Equation (\ref{m2 first two same}) and Equation (\ref{eq:mprefix}), we get
\begin{equation}\label{eq:m}
d_{2L+1+p}(m,FBits(\dfrac{s_0}{r},1,2L+1+p))=d_{\frac{3L}{2}+2+p}(m_2,FBits(\dfrac{s_0}{r},\dfrac{L}{2},2L+1+p)< 2^{p}.
\end{equation}
Since $\dfrac{s_0}{r}$ is not an integer, similar to Equation (\ref{eq:s/r eatimate}), we know that $FBits(\dfrac{s_0}{r}, 1,2L+1)$ is not $00\cdots 0$ or $11\cdots 1$. Then by Equation (\ref{eq:m}), we get $d_{2L+1+p}(m,FBits(\dfrac{s_0}{r},1,2L+1+p))=|m-FBits(\dfrac{s_0}{r},1,2L+1+p)|$. Therefore, by Equation (\ref{eq:m}) and Lemma $\ref{result_acuracy}$, we obtain
\begin{equation}
|\dfrac{m}{2^{2L+1+p}}-\dfrac{s_0}{r}|\leq 2^{-(2L+1)}.
\end{equation}
\end{proof}

\begin{theorem}\label{correctness_for_distributed_order_finding}
In {\rm Algorithm \ref{Distributed_order_finding}}, for any fixed $s_0\in \{0,1,\cdots,r-1\}$, the probability of
 $|\dfrac{m}{2^{2L+1+p}}-\dfrac{s_0}{r}|\leq 2^{-(2L+1)}$  is at least $\dfrac{1-\epsilon}{r}$. The probability that there exists an $s\in \{0,1,\cdots,r-1\}$ such that $|\dfrac{m}{2^{2L+1+p}}-\dfrac{s_0}{r}|\leq 2^{-(2L+1)}$ is at least $1-\epsilon$.
\end{theorem}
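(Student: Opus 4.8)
The plan is to fix $s_0$ and split into two cases according to whether $2^{\frac{L}{2}+1}\cdot\frac{s_0}{r}$ is an integer, exactly matching the two propositions already in hand. In the integer case, Proposition \ref{s/r_integer} immediately gives $\mathrm{Prob}(m=a_1\cdots a_{\frac{L}{2}+1}0\cdots0)\ge\frac{1}{r}$, and since that perfect output satisfies $|\frac{m}{2^{2L+1+p}}-\frac{s_0}{r}|=0\le 2^{-(2L+1)}$, the fixed-$s_0$ bound $\frac{1-\epsilon}{r}$ holds trivially. The real work is the non-integer case, where Proposition \ref{s/r_not_integer} reduces everything to lower-bounding the probability that $m_1$ and $m_2$ are \emph{simultaneously} good estimates.

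For the non-integer case I would introduce the ``good sets'' $S_1=\{m_1: d_{t_1}(m_1,FBits(\frac{s_0}{r},1,t_1))<2^p\}$ and $S_2=\{m_2: d_{t_2}(m_2,FBits(\frac{s_0}{r},\frac{L}{2},2L+1+p))<2^p\}$, which are precisely the hypotheses of Proposition \ref{s/r_not_integer}. Starting from $|\phi_{final}\rangle=\frac{1}{\sqrt{r}}\sum_s|\psi_{t_1,s/r}\rangle|\psi_{t_2,2^{\frac{L}{2}-1}s/r}\rangle|u_s\rangle$ and using that $\{|u_s\rangle\}$ is orthonormal, I would apply Equation (\ref{measure_orthonormal}) with the product projector $P_{S_1}\otimes P_{S_2}$ acting on the combined register $A\otimes B$. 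This writes $\mathrm{Prob}(m_1\in S_1,\,m_2\in S_2)$ as $\frac{1}{r}\sum_s\|P_{S_1}|\psi_{t_1,s/r}\rangle\|^2\,\|P_{S_2}|\psi_{t_2,2^{\frac{L}{2}-1}s/r}\rangle\|^2$, and dropping every term but $s=s_0$ lower-bounds it by $\frac{1}{r}\|P_{S_1}|\psi_{t_1,s_0/r}\rangle\|^2\,\|P_{S_2}|\psi_{t_2,2^{\frac{L}{2}-1}s_0/r}\rangle\|^2$.

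Each factor is a phase-estimation success probability. The first is the probability that the phase-estimation output for $\omega=\frac{s_0}{r}$ on $t_1=\frac{L}{2}+1+p$ qubits lands in $S_1$; Proposition \ref{phase_estimation_result} with $n=\frac{L}{2}+1$ (so $2^{t_1-n}=2^p$ and the stated $p$ matches $\lceil\log_2(2+\frac{1}{2\epsilon'})\rceil$) gives at least $1-\epsilon'$. For the second factor I would first record the bit-shift identity $FBits(2^{\frac{L}{2}-1}\frac{s_0}{r},1,t_2)=FBits(\frac{s_0}{r},\frac{L}{2},2L+1+p)$, which shows $S_2$ is exactly the phase-estimation good set for $\omega=2^{\frac{L}{2}-1}\frac{s_0}{r}$ on $t_2=\frac{3L}{2}+2+p$ qubits; Proposition \ref{phase_estimation_result} with $n=\frac{3L}{2}+2$ then yields at least $1-\epsilon'$. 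Hence $\mathrm{Prob}(m_1\in S_1,\,m_2\in S_2)\ge\frac{(1-\epsilon')^2}{r}\ge\frac{1-\epsilon}{r}$ since $\epsilon'=\frac{\epsilon}{2}$, and by Proposition \ref{s/r_not_integer} the corrected output $m$ is good whenever $m_1\in S_1$ and $m_2\in S_2$; this establishes the fixed-$s_0$ claim.

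For the existence statement I would argue that the events ``$m$ is good for $s_0$'' are pairwise disjoint: if a single output $m$ were good for two distinct $s_0,s_0'$ then $|\frac{s_0-s_0'}{r}|\le 2\cdot2^{-(2L+1)}=2^{-2L}$, contradicting $\frac{1}{r}>2^{-2L}$, which holds because $r\le N\le 2^L<2^{2L}$. Since $m$ is a deterministic function of $(m_1,m_2)$, the corresponding preimage events on the measurement outcomes are disjoint too, so $\mathrm{Prob}(\exists s)=\sum_{s_0=0}^{r-1}\mathrm{Prob}(m\text{ good for }s_0)\ge r\cdot\frac{1-\epsilon}{r}=1-\epsilon$. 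I expect the main obstacle to be the non-integer case: justifying the two-register factorization and the keep-only-$s_0$ lower bound through Equation (\ref{measure_orthonormal}), and correctly matching the parameters of Proposition \ref{phase_estimation_result} to $S_1$ and $S_2$ via the bit-shift identity for register $B$.
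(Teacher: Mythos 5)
Your proposal is correct and follows essentially the same route as the paper's proof: the same split on whether $2^{\frac{L}{2}+1}\cdot\frac{s_0}{r}$ is an integer, the same use of Proposition \ref{s/r_integer} in the integer case, and the same combination of Proposition \ref{phase_estimation_result}, Equation (\ref{measure_orthonormal}) and Proposition \ref{s/r_not_integer} to get $\frac{(1-\epsilon')^2}{r}>\frac{1-\epsilon}{r}$ in the non-integer case. You actually supply more detail than the paper does, in particular the explicit product-projector factorization and the disjointness argument (via $\frac{1}{r}>2^{-2L}$) needed to sum the fixed-$s_0$ bounds into the existence statement, both of which the paper leaves implicit.
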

\begin{proof}
By Proposition \ref{s/r_integer},  for any fixed $s_0\in \{0,1,\cdots,r-1\}$ with $2^{\frac{L}{2}+1}\cdot \dfrac{s_0}{r}$ being an integer, we have
\begin{equation}
{\rm Prob}((\dfrac{m}{2^{2L+1+p}}=\dfrac{s_0}{r})\geq\dfrac{1}{r}.
\end{equation}
For any fixed $s_0\in \{0,1,\cdots,r-1\}$ with $2^{\frac{L}{2}+1}\cdot \dfrac{s_0}{r}$ being not an integer, by Proposition \ref{phase_estimation_result} and Equation (\ref{measure_orthonormal}), we get that the probabilty of $d_{t_2}(m_2,FBits(\dfrac{s_0}{r},\dfrac{L}{2},2L+1+p))<2^{p}$ and $d_{t_1}(m_1,FBits(\dfrac{s_0}{r},1,t_1))< 2^{p}$ is at least $\dfrac{1}{r}(1-\epsilon')^2=\dfrac{1}{r}(1-\dfrac{\epsilon}{2})^2>\dfrac{1-\epsilon}{r}$. 
Consequently,  by Proposition \ref{s/r_not_integer}, we obtain
\begin{equation}
{\rm Prob}(|\dfrac{m}{2^{2L+1+p}}-\dfrac{s_0}{r}|\leq 2^{-(2L+1)})> \dfrac{1-\epsilon}{r}.
\end{equation}
Finally, the theorem has been proved.
\end{proof}


\section{Complexity analysis} \label{sec:complexity analysis}

   The complexity of the circuit of (distributed) order-finding algorithm depends on the construction of $C_t(M_a)$. There are two kinds of implementation of $C_t(M_a)$ proposed by Shor \cite{shor1999polynomial}. The first method (denoted as method (I)) needs time complexity $O(L^3) $ and space complexity $O(L)$, and the second method (denoted as method (II) ) needs time complexity $O(L^2\log L \log\log L) $ and space complexity $O(L\log L \log\log L)$. In this section, we compare our distributed order-finding algorithm with the traditional order-finding algorithm. For a more concrete comparison, we consider that $C_t(M_a)$ is implemented by method (I). There is a concrete implementation of order-finding algorithm by using method (I) in \cite{yimsiriwattana2004distributed}. However, the advantages of our distributed order-finding algorithm in space and circuit depth are independent of whether method (I) or method (II) is used.

\textbf{Space complexity} The implementation of the operator $C_t(M_a)$ in method (I) needs $t+L$ qubits plus $b$ auxiliary qubits for any positive integer $a$, where $b$ is $O(L)$. By Theorem \ref{quantum_telepotation}, to teleport $L$ qubits, computers $A$ and $B$ need to share $L$ pairs of EPR states and communicate with  $2L$ classical bits. As a result, $A$ needs $\dfrac{5L}{2}+1+\lceil\log_2(2+\dfrac{1}{\epsilon}\rceil)+b$ qubits and $B$ needs $\dfrac{5L}{2}+2+\lceil\log_2(2+\dfrac{1}{\epsilon})\rceil+b$ qubits. As a comparison, order-finding algorithm needs $3L+1+\lceil\log_2(2+\dfrac{1}{2\epsilon})\rceil+b$ qubits. So, our distributed order-finding algorithm can reduce nearly $\dfrac{L}{2}$ qubits.

\textbf{Time complexity.} The operator $C_t(M_a)$ can be implemented by means of $O(tL^2)$ elementary gates in method (I). Hence the gate complexity (or time complexity) in both our distributed order-finding algorithm and order-finding algorithm is $O(L^3)$.

\textbf{Circuit depth}. By Figure \ref{fig:controlled-U}, we know that the circuit depth of $C_t(M_a)$  depends on the circuit depth of  controlled-$M_a^{2^x} (x=0,1,\cdots,t-1)$ and $t$. The circuit depth of controlled-$M_a^{2^x}$ is $O(L^2)$ in method (I). By  observing the value ``$t$" in order-finding algorithm and our distributed order-finding algorithm, we clearly get that the circuit depth of each computer in our distributed order-finding algorithm is less than the traditional order-finding algorithm, even though both are $O(L^3)$.

\textbf{Communication complexity.} In our distributed Shor's algorithm, we need to teleport $L$ qubits. Therefore, the communication complexity of our distributed Shor's algorithm is $O(L)$. As a comparison, the communication complexity of the distributed order-finding algorithm proposed in \cite{yimsiriwattana2004distributed} is $O(L^2)$.

\section{Conclusions} \label{sec:conclusions}
	In this paper, we have proposed a new distributed Shor's algorithm. More specifically, we have proposed a new distributed order-finding algorithm. In this distributed quantum algorithm, two computers work sequentially via quantum teleportation. Each of them can obtain an estimation of partial bits of $\dfrac{s}{r}$  for some $s\in\{0,1,\cdots,r-1\}$ with high probability. It is worth mentioning that they can also be executed in parallel to some extent. We have shown that our distributed algorithm has advantages over the traditional order-finding algorithm in space and circuit depth. Our distributed order-finding algorithm can reduce nearly $\dfrac{L}{2}$ qubits and reduce the circuit depth to some extent for each computer. However, unlike parallel execution,  the way of serial execution that has been used in our algorithm leads to noise in both computers. 

We have proved the correctness of this distributed algorithm on two computers, a natural problem is whether or not this method can be generalized to multiple computers or to other quantum algorithms. We would further consider  the problem in subsequent study.

\section{Acknowledgements}
 This work  is  partly supported by the National Natural Science Foundation of China (Nos. 61572532,  61876195) and the Natural Science
	Foundation of Guangdong Province of China (No. 2017B030311011).

\end{document}